\newcolumntype{d}[1]{D{.}{.}{#1}}
\definecolor{darkred}{RGB}{141, 0, 0}
\newtheorem{theorem}{Theorem}
\begin{document}
\title{Resolving the automation paradox: falling labor share, rising wages\thanks{We thank Keelan Beirne, Maarten Goos, Ramayya Krishnan, Alan Manning, Ezra Oberfield, Lowell Taylor and Anna Salomons for comments that improved the paper, and Raimundo Contreras for research assistance. Autor acknowledges research support from the Google Technology and Society Visiting Fellows Program, the NOMIS Foundation, the Schmidt Sciences AI2050 Fellowship, the Smith Richardson Foundation, and the James M. and Kathleen D. Stone Foundation.}}

\thispagestyle{empty}
\author{{\parbox{4.5cm}{\centering David Autor\\MIT and NBER}\hspace{0.cm}\parbox{4.5cm}{\vspace{0.1cm}\centering B.N. Kausik\\Independent}}\bigskip}
 
\clearpage\maketitle
\thispagestyle{empty}

\begin{abstract}
    \noindent A central socioeconomic concern about Artificial Intelligence is that it will lower wages by depressing the labor share---the fraction of economic output paid to labor. We show that declining labor share is more likely to raise wages. In a competitive economy with constant returns to scale, we prove that the wage-maximizing labor share depends only on the capital-to-labor ratio, implying a non-monotonic relationship between labor share and wages. When labor share exceeds this wage-maximizing level, further automation increases wages even while reducing labor's output share. Using data from the United States and eleven other industrialized countries, we estimate that labor share is too high in all twelve, implying that further automation should raise wages. Moreover, we find that falling labor share accounted for 16\% of U.S. real wage growth between 1954 and 2019. These wage gains notwithstanding, automation-driven shifts in labor share are likely to pose significant social and political challenges.

    \bigskip
    \noindent\textbf{Keywords:} Wages, Labor share, Task models, Automation, Artificial Intelligence, Income distribution, Producer Theory  

    \bigskip
    \noindent\textbf{JEL Classification:}  D33, E23, E25, J23, J30, O33, O40.

\end{abstract}

\pagenumbering{arabic}

%%%%%%%%%%%% INTRO %%%%%%%%%%%%%%% 

\section*{Main}

Labor share, the fraction of national income (or GDP) that is paid to workers as wages, salaries, or benefits, is declining in industrialized countries \cite{karabarbounis2024perspectives}. According to Bureau of Labor Statistics (BLS) data, the labor share in the US 
%private nonfarm business sector 
fell from 67.8\% to 58.4\% between 1987 and 2019, Figure \ref{fig:1}. Leading explanations for this intensively scrutinized phenomenon include (1) technological change \cite{karabarbounis2014, acemoglu2022tasks, OberfieldRaval, jones2024framework, karabarbounis2024perspectives}; (2) globalization \cite{Elsbyetal2013}; (3) changes in market structure \cite{Autoretal2020, barkai2020}; (4) declining union bargaining power \cite{bentolila2003, benmelech2022}; and (5) demographic changes. \cite{glover2020, Hopenhayn2018, kausik2023cognitive}

\begin{figure}
    \centering
    \includegraphics[width=0.5\linewidth]{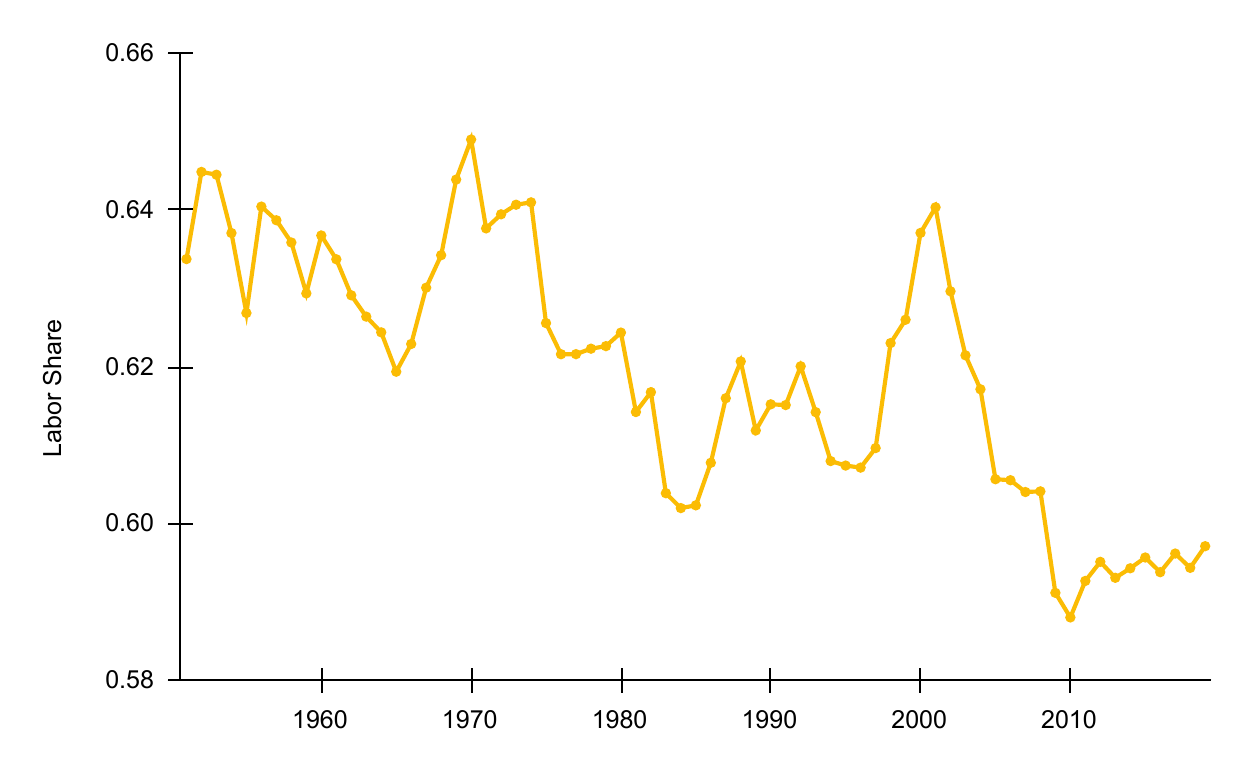}
    \caption{\textbf{Share of labor compensation in gross domestic product \\ at current national prices in the United States, 1950--2019}}
    \label{fig:1}
    \medskip
    \raggedright
    \small 
    {\footnotesize \textit{Note.} Figure is based on Federal Reserve Bank of St. Louis series LABSHPUSA156NRUG and Feenstra, Robert C., Robert Inklaar and Marcel P. Timmer (2015), ``The Next Generation of the Penn World Table'' \textit{American Economic Review}, 105(10), 3150-3182.}
\end{figure}

Against this backdrop, there is rising concern that rapid advances in Artificial Intelligence may accelerate the downward trend in labor share across numerous countries, displacing human labor and depressing wages.\cite{korinek2018artificial, acemoglu2025simple, brynjolfsson2025canaries} The challenge is particularly acute if automation proves ``so-so,''\cite{Acemoglurestrepo2019} meaning that it displaces workers across a broad range of tasks with minimal improvements in productivity, benefiting capital but not labor.

We show that these multiple concerns regarding the trajectory of wages, automation, productivity, and labor share boil down to a simple metric, the capital-labor ratio. Specifically, we prove that in a competitive economy operating at constant returns to scale, the maximum wage and the labor share that maximizes wages depend only on the capital-to-labor ratio. Paradoxically, when the prevailing labor share is above the wage-maximizing level, increasing automation that reduces labor share will increase wages. The relevance of this theoretical result is confirmed through analysis of economic data from twelve industrialized countries. We estimate that a further increase in automation with a corresponding decline in labor share would cause average wages to rise rather than fall. 

The social and political implications of our results are not straightforward. A fall in the labor share that increases average wages may nevertheless raise inequality and reduce wages in some sectors. Moreover, a falling labor share necessarily means increased income and wealth inequality due to the highly uneven distribution of capital ownership. Concentration of wealth and income might in turn potentially magnify political economic risks such as the resource curse\cite{ploeg2011natural}.
     
\section*{Resolving the automation-wage paradox }

Let $Y(K,L,\lambda)$ be a family of smooth production functions for economic output $Y$ from inputs capital $K$ and labor $L$ respectively, at labor share $\lambda$. Our novel formulation of the production function is general, encompassing any set of technologies that combine labor and capital to produce output. Smoothness guarantees that the function is differentiable. Constant returns to scale means that the wage can be written as $w=\lambda (Y/L)$, equal to total output scaled by the labor share and divided by the quantity of labor employed. 

\medskip

\begin{theorem}\label{gentheorem}
Consider a family of constant-returns-to-scale production functions $Y(K,L,\lambda)$. Then: \textbf{(a)} The wage-maximizing labor share and the maximum wage itself depend only on the capital-labor ratio $k = K/L$. \textbf{(b)} Suppose there exists at least one labor share $\lambda \in (0,1)$ satisfying the first-order condition: $\lambda = -1/ \left(\partial \ln Y / \partial \lambda \right)$. If at each such $\lambda$, the second-order condition holds $\frac{\partial^2 \ln Y}{\partial \lambda^2} < 0$, then the wage-maximizing labor share is unique and lies strictly between 0 and 1.
\end{theorem}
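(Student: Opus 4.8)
The plan is to collapse the whole problem to a one-variable optimization in $\lambda$, with $K$ and $L$ (equivalently $k=K/L$) held fixed, using homogeneity for part (a) and the stated first- and second-order conditions for part (b).

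\textbf{Part (a).} Constant returns to scale means $Y$ is homogeneous of degree one in $(K,L)$, so I can write $Y(K,L,\lambda)=L\,y(k,\lambda)$ with $y(k,\lambda):=Y(k,1,\lambda)$. Then the wage $w=\lambda Y/L=\lambda\,y(k,\lambda)$ is a function of $(k,\lambda)$ alone. Fixing $k$ and maximizing the smooth map $\lambda\mapsto\lambda\,y(k,\lambda)$ over $\lambda$, both the argmax $\lambda^\star$ and the optimal value $w^\star=\lambda^\star y(k,\lambda^\star)$ depend only on $k$. That is all of (a); no further work is needed.

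\textbf{Part (b).} Work with $g(\lambda):=\ln w=\ln\lambda+\ln Y(K,L,\lambda)-\ln L$ (legitimate since $Y,w>0$ on the relevant range), a smooth function of $\lambda$ for fixed $K,L$. Then $g'(\lambda)=1/\lambda+\partial\ln Y/\partial\lambda$, so stationarity $g'(\lambda)=0$ is exactly the stated first-order condition $\lambda=-1/(\partial\ln Y/\partial\lambda)$; and $g''(\lambda)=-1/\lambda^2+\partial^2\ln Y/\partial\lambda^2$. The key observation is that the hypothesis $\partial^2\ln Y/\partial\lambda^2<0$ at every stationary $\lambda$ is \emph{strictly stronger} than what a local maximum of $g$ requires: it forces $g''(\lambda)<-1/\lambda^2<0$ there, so every interior stationary point of $g$ (hence of $w$) is a strict local maximum. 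For uniqueness, suppose $\lambda_1<\lambda_2$ were two stationary points in $(0,1)$; each being a strict local maximum, $g$ drops strictly below $g(\lambda_1)$ just right of $\lambda_1$ and below $g(\lambda_2)$ just left of $\lambda_2$, so $g$ restricted to the compact interval $[\lambda_1,\lambda_2]$ attains its minimum at an interior point $c$, where $g'(c)=0$. But then $c$ is a stationary point at which $g$ has an interior local minimum, contradicting that every stationary point is a strict local maximum. Hence there is at most one stationary point; with the assumed existence of one in $(0,1)$, there is exactly one, $\lambda^\star\in(0,1)$.

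\textbf{From local to global, and the main obstacle.} The same ``intermediate minimum'' device upgrades the unique local maximum to a global one: if some $\lambda_0\in(0,1)$ with $\lambda_0\neq\lambda^\star$ had $g(\lambda_0)\geq g(\lambda^\star)$, then $g$ on the closed interval with endpoints $\lambda^\star,\lambda_0$ would attain its minimum strictly inside (since it dips below $g(\lambda^\star)$ immediately beyond $\lambda^\star$), producing a second stationary point — impossible. Therefore $\lambda^\star$ is the unique wage-maximizing labor share and lies strictly between $0$ and $1$. The argument is otherwise routine; the one place requiring care — the ``obstacle'' such as it is — is precisely this passage to a global optimum on the \emph{open} interval $(0,1)$, where a supremum need not a priori be attained, which is why the final step is phrased as a contradiction via an interior minimum rather than by comparing boundary values, and why the hypothesis must hold at \emph{all} stationary points rather than just one.
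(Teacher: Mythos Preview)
Your proof is correct and follows essentially the same route as the paper: reduce to $w=\lambda\,Y(k,1,\lambda)$ via homogeneity for part (a), then for part (b) compute $g''(\lambda)=-1/\lambda^2+\partial^2\ln Y/\partial\lambda^2$ to see every stationary point is a strict local maximum, and rule out two such maxima by trapping an interior local minimum between them. Your treatment is in fact slightly tighter than the paper's in two places: in (a) you observe directly that $w$ depends only on $(k,\lambda)$ without first deriving the FOC, and in (b) you explicitly carry out the local-to-global upgrade on the open interval (the paper simply asserts the unique interior critical point is the global maximizer).
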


\begin{proof}
\textbf{Part (a):} By constant returns to scale, 
$$Y/L = Y((K/L),(L/L),\lambda) = Y(k,1,\lambda),$$
where $k \equiv K/L$ is the capital-to-labor ratio. Therefore, wages are given by
\begin{equation}\label{eq:gen_wage}
    w = \lambda(Y/L) = \lambda Y(k,1,\lambda)
\end{equation}
and 
\begin{equation}\label{eq:log_gen_wage}
   \ln w = \ln \lambda +\ln Y(k,1,\lambda).
\end{equation}
Differentiating with respect to $\lambda$,
\begin{equation}\label{eq:log_deriv}
   \frac{\partial \ln w} {\partial \lambda} = \frac{1}{\lambda} +  
   \frac{\partial \ln Y(k,1,\lambda )}{\partial \lambda}.
\end{equation}
Setting $\partial \ln w/\partial\lambda = 0$, we obtain the first-order condition
\begin{equation}\label{eq:lambda_s}
   \lambda = -1/ \left(\partial \ln Y(k,1,\lambda )/\partial \lambda \right).
\end{equation}
The solution(s) to Equation \eqref{eq:lambda_s} depend only on the capital-to-labor ratio $k$. Some of these solutions may be maxima, while others may be minima. Note that $w(0) = 0$ and, at the other boundary, the wage at $\lambda = 1$ is $w(1) = Y(k,1,1)$, which also depends only on $k$. Therefore, wages are maximized either at an interior solution to Equation \eqref{eq:lambda_s} or at one of the boundaries, and the wage-maximizing labor share $\lambda^{*}$ depends only on the capital-to-labor ratio. Substituting $\lambda^{*}$ into Equation \eqref{eq:gen_wage}, it follows that the maximum wage also depends only on $k$:
$$w^{*} = \lambda^{*}Y(k,1,\lambda^{*}).$$

\textbf{Part (b):} By assumption, there exists at least one interior solution $\tilde{\lambda} \in (0,1)$ of Equation \eqref{eq:lambda_s}, i.e., $\tilde{\lambda} = -1/(\partial \ln Y/\partial \lambda).$ We now verify that each such $\tilde{\lambda}$ is a global maximum. 

Differentiating Equation \eqref{eq:log_deriv} with respect to $\lambda$,
\begin{equation}\label{eq:second_deriv}
  \frac{\partial^2 \ln w}{\partial \lambda^2} = -\frac{1}{\lambda^2} + \frac{\partial^2 \ln Y}{\partial \lambda^2}.
\end{equation}
At any critical point $\tilde{\lambda}$ satisfying Equation \eqref{eq:lambda_s}, we have $\frac{1}{\tilde{\lambda}^2} > 0$ and, by assumption, $\frac{\partial^2 \ln Y}{\partial \lambda^2}\big|_{\tilde{\lambda}} < 0$. Therefore,
$$\frac{\partial^2 \ln w}{\partial \lambda^2}\bigg|_{\tilde{\lambda}} = -\frac{1}{\tilde{\lambda}^2} + \frac{\partial^2 \ln Y}{\partial \lambda^2}\bigg|_{\tilde{\lambda}} < 0,$$
confirming that each interior critical point is a strict local maximum.

To establish uniqueness, suppose for contradiction that there exist two distinct interior local maxima at $\lambda_1, \lambda_2 \in (0,1)$ with $\lambda_1 < \lambda_2$. By continuity of $w(\lambda)$ and the fact that $w$ attains local maxima at both points, there must exist at least one local minimum at some $\lambda_m \in (\lambda_1, \lambda_2)$. At this local minimum, the first-order condition \eqref{eq:lambda_s} must hold: $\frac{\partial \ln w}{\partial \lambda}\big|_{\lambda_m} = 0$. However, we have shown that every solution to Equation \eqref{eq:lambda_s} satisfies the second-order condition for a strict local maximum. This contradicts the fact that $\lambda_m$ is a local minimum. Therefore, there can be at most one interior critical point.

Since we assume the existence of at least one interior solution to \eqref{eq:lambda_s}, and we have shown there can be at most one, the wage-maximizing labor share is a unique global maximum with $\lambda^{*} \in (0,1)$.
\end{proof}

Equation \eqref{eq:log_deriv} reflects two competing effects of an increase in the labor share on wages: the first term on the right represents the direct effect of increasing labor share on wages, which is necessarily positive; and the second term on the right represents the indirect effect of increasing labor share on wages through its effect on output. This term is negative at the critical point but may be positive or negative elsewhere. Per Equation \eqref{eq:second_deriv}, a critical point is a maximum if the indirect wage effect (falling output) dominates the direct effect (higher labor share of output) as labor share rises beyond the critical point. 

\section*{Validation on historical data}

We quantify our theoretical results by applying the workhorse Constant Elasticity of Substitution production function\cite{Arrowetal1962, McFadden1963} to economic data from the United States and eleven other industrialized countries. This production function takes the form:
\begin{equation}\label{eq:ces-prod-fn}
Y = A\left(\alpha  K^{\rho} + (1-\alpha)  L^{\rho}\right)^{1 / \rho}, 
\end{equation}
where $\alpha$ is a parameter in the unit interval $\left[0,1\right]$, $\sigma = 1/\left(1-\rho\right)$ is the elasticity of substitution between labor and capital, and $A$ is Total Factor Productivity (TFP), which is independent of other parameters of the model. The parameter $\alpha$ in Equation \eqref{eq:ces-prod-fn} is often referred to as the `share' parameter, referring not to the capital \textit{share of output} as in our discussion above but the \textit{capital share of tasks} in the production function. 
Equation \eqref{eq:ces-prod-fn} can be rewritten as below relating the output per labor hour $y \equiv Y/L$ to the capital-to-labor ratio $k$, elasticity of substitution $\sigma$ and the labor share $\lambda$
\begin{equation}
      y = A\left(\lambda + (1-\lambda)k^{(1-\sigma)/\sigma}\right)^{(\sigma/(1-\sigma))},
\end{equation}

where $\lambda = \left(1-\alpha\right)/\left(\alpha k^\rho + (1-\alpha)\right)$. Numerous studies \cite{hamermesh1993labor, katz-murphy1992,  chirinko2008sigma,  knoblach2020elasticity} report values of $\sigma$ in the range of $0.5$ to $1.5$.  Taking logarithms and expanding $\lambda + (1-\lambda)k^{(1-\sigma)/\sigma}$ as a Taylor series in $x =(1-\sigma)/\sigma$ around $x=0$ (a valid approximation when $\sigma$ is near $1$) yields: 
\begin{align}\label{ces_taylor}
    \ln (y) &= \ln (A) +\frac{1}{x}\left((1-\lambda)\ln(k)x + \frac{1}{2!}(1-\lambda)\lambda (\ln(k))^2x^2+ ... \right )   \notag \\
    &\approx \ln (A) + (1-\lambda)\ln(k) + \frac{1}{2}(1-\lambda)\lambda (\ln(k))^2x.
\end{align}

This expression for $y$, equal to output per labor hour, provides a basis for analyzing historical data from twelve countries: the United States, Austria, Belgium, Denmark, France, Germany, Italy, Japan, the Netherlands, Spain, Sweden and the United Kingdom. Data are sourced from the 2023 EU KLEMS database \cite{euklems2023} supplemented with U.S.-specific data sources, as reported in Table \ref{tab:data_sources}. For each country $j$, we construct time series for output $Y_{jt}$, total labor hours $L_{jt}$, wages $w_{jt}$, labor share $\lambda_{jt}$, Total Factor Productivity (TFP) $A_{jt}$ normalized to one in the base year, and the capital-to-labor ratio $k_{jt}$. To create comparable capital-to-labor units, we convert non-dollar, non-Euro currencies to Euros at the average weekly closing rates for 2015, which is the reference year for the chained volumes of the data series (1 Danish Krone = 0.13404€; 1 Yen = 0.00744€; 1 Swedish Krona = 0.107€; and 1 GBP = 1.3761€). 

We fit Ordinary Least Squares models for output per labor hour to Equation \eqref{ces_taylor} as below,
\begin{equation}\label{eq:country-fit}
    \ln \left(\frac{Y_{jt}}{L_{jt}A_{jt}}\right) - (1-\lambda_{jt})\ln(k_{jt})= \alpha_{0j}   
     + \frac{1}{2}\alpha_{1j} \lambda_{jt}(1-\lambda_{jt}) (\ln k_{jt})^2 +\epsilon_{jt},
\end{equation}

where $\hat{\alpha}_{1j}$ is an estimate of $(1-\sigma_j)/\sigma_j$ for country $j$, and standard errors are clustered at the country level. We have adequate data to support such an estimate for the US. For the other eleven countries, we pool country-by-year data to obtain a common $\hat{\alpha}_0$ and $\hat{\alpha}_{1}$. Point estimates for $\alpha_0$ and $\alpha_{1}$, reported in Table \ref{tab:fit_alphas}, provide a normalized time series that relates real output per labor hour, adjusted for TFP, to the evolution of the capital stock.  

We can rearrange our estimates of Equation \eqref{eq:country-fit} and differentiate with respect to $\lambda$ to obtain estimates of the effect of labor share on wages:
  \begin{equation}\label{deriv_fit}
   \left.\frac{ \partial \ln (w )}{\partial \ln (\lambda)} \right |_{jt} =  
   1 -  \lambda_{jt} \ln(k_{jt})\left(1   
     -(\hat{\alpha}_{1j}/2)(1-2\lambda_{jt} ) \ln (k_{jt}) \right ) 
\end{equation}

Figure \ref{fig:wage_deriv} reports the estimated wage effect of a fall in the labor share ($-\partial \ln(w) \partial \ln(\lambda)$ for each country, operating through the direct effect, which is necessarily negative, the indirect effect, which will be positive if $\lambda < \lambda^*$, and the sum of these two effects. The standard error in Equation \eqref{deriv_fit} can be estimated via the standard error in $\alpha_{1j}$, with 95\% confidence intervals shown as error bars in the figure. We note that since the derivative of wages with respect to labor share fluctuates within countries over time as a function of $\lambda_{jt}$ and $k_{jt}$, we report the mean of each series for each country. We obtain 95\% confidence intervals for Equation \eqref{deriv_fit} for each country $j$ by calculating the derivative $\left.\frac{ \partial \ln (w )}{\partial \ln (\lambda)} \right |_{j}$ at $\hat{\alpha}_{1j}\pm1.96 \, \text{SE}(\hat{\alpha}_{1j})$.

\begin{figure}
    \centering
    \includegraphics[width=0.75\linewidth]{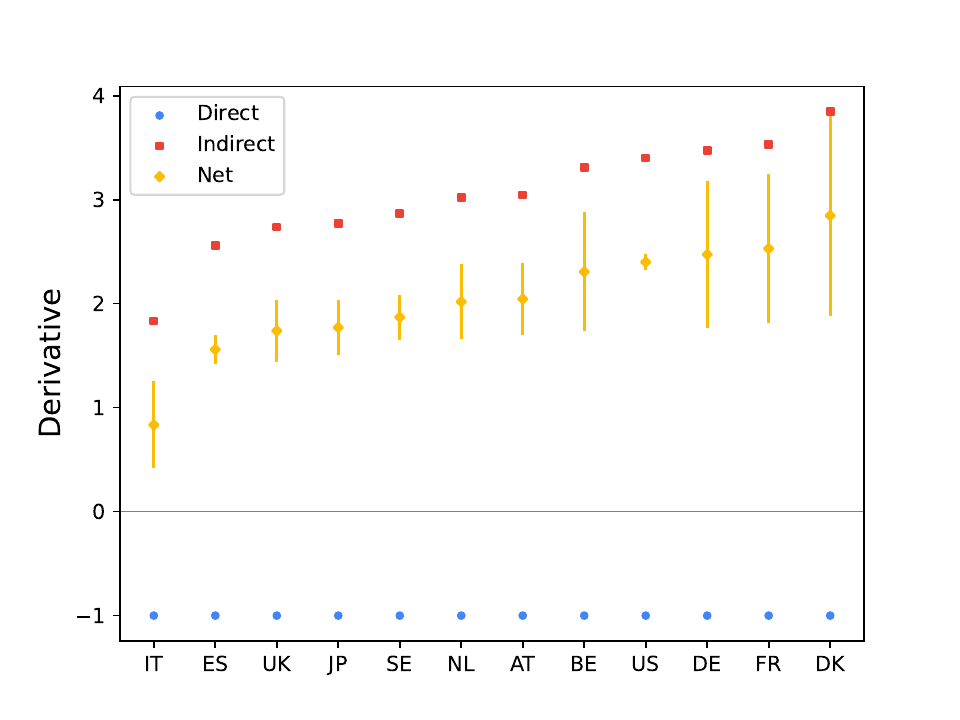}
  \captionsetup{justification=justified, width=0.9\linewidth}
    \caption{\textbf{Estimated Average $-\frac{\partial ln(w)}{\partial \ln(\lambda)}$ for Select Economies.}}
     \medskip
    \begin{justify}
    \small    
    Figure reports the estimated components of Equation \eqref{deriv_fit}. Dots correspond to the estimated direct wage effect of reducing labor labor share; squares correspond to the estimated indirect effect on wages via increased output; and diamonds correspond to the net of these two effects. Countries are ordered by the estimated net increase in wages from a one point decline in labor share. Error bars are $\pm 1.96$ times the standard error of Equation \eqref{deriv_fit}.
    \end{justify}
    \label{fig:wage_deriv}
\end{figure}

\begin{figure}[ht]
    \centering
       
    \includegraphics[width=0.75\linewidth]{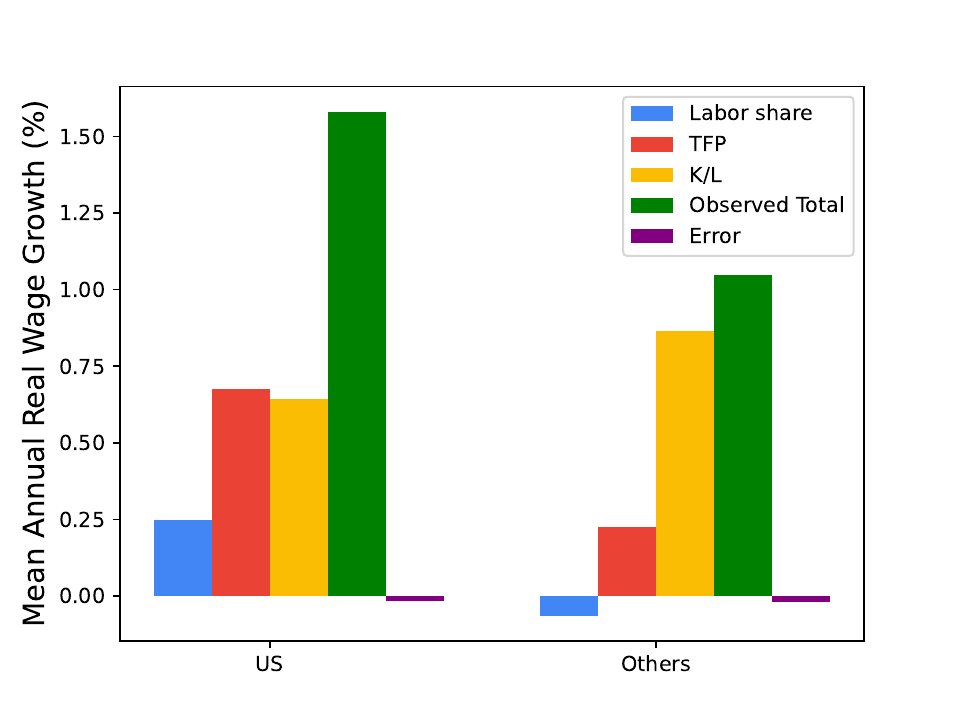}
    \captionsetup{justification=justified, width=0.9\linewidth}
    \caption{\textbf{Observed and Estimated Annual Real Wage Growth by Component for Select Economies.}}
     \medskip
    \begin{justify}
    \small    
    Figure reports the components of Equation \eqref{wage_growth} averaged over the respective time spans per Table \ref{tab:data_sources}, using the point estimates of Table \ref{tab:fit_alphas}. Also shown is the actual observed wage growth. For the United States, the sum of the estimated growth components is within $\approx 1\%$ of the observed actual and the declining labor share of Figure \ref{fig:1} contributed an estimated $\approx 15.7\%$ of the observed growth between 1954 and 2019.  For the other countries as a pool, the sum of the estimated growth components is within $\approx 2\%$ of the observed actual, and the change in labor share contributed $-6.1\%$  of the observed growth between 1995 and 2021. In the United States, labor share declined $0.1\%$ annually between 1954 to 2019, while in the other countries, the average labor share  increased $0.06\%$ annually between 1995 and 2021.
    \end{justify}
      
    \label{fig:wage_growth_by_source}
\end{figure}

Our estimates of ${\alpha}_0$ and ${\alpha}_{1}$ suggest that at current values of the capital-to-labor ratio $K/L$, a fall in the labor share would raise the average wage in all twelve countries, with the smallest effect in Italy and the largest in Denmark. The estimate of $-\partial \ln w/ \partial \ln \lambda = 2.40 \, (\text{SE}=0.040)$ for the U.S. is plausibly the most reliable since it is based on 66 years of data, versus 26 years for other countries. 

Using Equation \eqref{eq:country-fit}, we can decompose observed wage growth into three components as below:
\begin{align}\label{wage_growth}
    \frac{d\ln(w)}{dt} = & \frac{d\ln (A)}{dt} + \notag \\
     &\frac{d\ln (\lambda)}{dt} \left[1  - \lambda\ln(k)(1 -\hat{\alpha}_{1j}/2)(1-2\lambda) \ln (k)\right]  + \notag \\
    & \frac{d\ln(k)}{dt}\left [(1-\lambda)(1 + \hat{\alpha}_{1j} \lambda\ln(k) \right ] 
\end{align}
The first term on the right reflects the contribution to wage growth from total factor productivity growth; the second term reflects the contribution of the changing labor share; and the third reflects the effect of capital deepening. Figure \ref{fig:wage_growth_by_source} reports the components of Equation \eqref{wage_growth} as the mean of forward and backward differences averaged over the respective time spans per Table \ref{tab:data_sources}, using the point estimates of Table \ref{tab:fit_alphas}. Also shown is observed wage growth and a residual term, equal to the sum of the derived components from Equation \eqref{wage_growth} minus the observed change. For the U.S., the sum of the estimated growth components is within $\approx 1\%$ of the observed actual. For the other countries as a pool, the sum of the estimated growth components is also within $\approx 2\%$ of the observed actual. 

While TFP growth and capital deepening explain the bulk of annual wage growth in the U.S., we estimate that the \textit{falling} labor share (Figure \ref{fig:1}) contributed $\approx15.7\%$ of the observed growth between 1954 and 2019. In contrast to the United States, where the labor share fell on average by $0.1\%$ annually between 1954 to 2019, the labor share rose in the other countries considered between 1995 and 2021 (at an average annual rate of $0.06\%$). The rising labor share is thus estimated to have cumulatively \textit{retarded} wage growth by $\approx 6.1\%$ between 1995 and 2021 relative to the observed increase in this set of countries.

\section*{Discussion and policy implications}

Our main result is that for a competitive economy operating at constant returns to scale, the wage-maximizing labor share depends only on the capital-to-labor ratio. One implication of this result is that, while a declining labor share is frequently viewed as an ominous indicator of falling labor demand, technologies or policies that increase labor share can decrease wages and vice versa. 

Our estimates across twelve industrialized economies suggest that their prevailing labor share is above the wage-maximizing level in all of them, implying that a technological change that reduced labor share would boost wages. In the best case scenario, higher wages would offer workers more disposable income to invest in capital markets so as to participate in the returns to capital's larger share of the output. However, if labor does not own a significant portion of the capital stock, many political economy perils may arise, with potential repercussions for the stability of democratic governments.\cite{ploeg2011natural} Mineral-rich economies are often controlled by monarchs or a few state-sponsored oligarchs, and typically have low labor shares, wide disparities in wealth, and poor records of democratic freedom \cite{guerriero2019democracy}. A counterexample is Norway, which is a mineral-rich democracy that invests profits from oil revenues into a sovereign wealth fund that plausibly benefits the majority of its citizens.  

We finally note that our analysis assumes a competitive economy in which changes in labor share are caused by process or product innovations, or by capital accumulation (assuming $\sigma > 1$). If observed declines in labor share are instead driven by rent extraction, \cite{barkai2020, ace-rest-2024rents} exercise of market power, \cite{Autoretal2020, de2020rise} or failure of competition policy, \cite{gutierrez2017declining, philippon2019great} we would expect a falling labor share to lower wages directly without a countervailing positive effect arising from a higher marginal product of capital.

\begin{table}
    \centering
    \caption{\textbf{Data Sources}}
    \begin{tabular}{ll} \\
         \multicolumn{1}{c}{\textbf{Variable}}& \multicolumn{1}{c}{\textbf{Source}} \\ \hline
         \addlinespace
         \multicolumn{2}{c}{\textbf{United States (1954:2019)}}\\ 
         \addlinespace
         Real GDP& FRED GDPC1\\ 
         Total factor productivity & FRED RTFPNAUSA632NRUG\\  
         Real capital stock K& FRED RKNANPUSA666NRUG\\  
         Total labor hours& FRED B4701C0A222NBEA\\  
         Labor Share& FRED LABSHPUSA156NRUG\\ \addlinespace

        \multicolumn{2}{c}{\textbf{Japan (1995:2020)}}\\  
        \addlinespace
        Real GDP& KLEMS 2023\\ 
        Total factor productivity & FRED RTFPNAJPA632NRUG\\  
        Real capital stock K& KLEMS 2023\\ 
        Total labor hours& KLEMS 2023\\ 
        Labor Share& KLEMS 2023\\ 
        \addlinespace
        \multicolumn{2}{c}{\textbf{UK and EU countries (1995:2021)}}\\
        \addlinespace
        Real GDP& KLEMS 2023\\ 
        Total factor productivity & KLEMS 2023\\ 
        Real capital stock K& KLEMS 2023\\ 
        Total labor hours& KLEMS 2023\\ 
        Labor Share& KLEMS 2023
     \\ \hline\end{tabular}
    \label{tab:data_sources}
\end{table}

\bigskip

\begin{table}
    \addtolength{\tabcolsep}{-2 pt}
    \large
    \centering
    \caption{\textbf{Country-Specific Fit Parameters for Select Economies.}}
    \smallskip
    \begin{tabular}{ccccc} 
        &  & US&& Others \vspace{2bp}\\ \hline 
        $\alpha_0$&& 1.366&& 0.205\\ 
        && (0.065)&& (0.988)\\ \addlinespace  
        $\alpha_1$&& 0.087&&0.392\\
        && (0.020)&& (0.299)\\ \addlinespace 
        $\hat{\sigma}$&& 0.920&& 0.719\\
 & & (0.017)& &(0.196)\\ \addlinespace
        $R^2$&& 0.228&& 0.153\\ \addlinespace
        N && 65 && 323 \\ \hline
    \end{tabular}

    \begin{justify}
    \smallskip
    \small    
    Table reports point estimates for Equation \eqref{eq:country-fit} for 12 countries: Austria (AT), Belgium (BE), Germany (DE), Denmark (DK), Spain (ES), France (FR), Italy (IT), Japan (JP), the Netherlands (NL), Sweden (SE), the United Kingdom (UK), and the United States (US). Data coverage: U.S. 1954-2019; Japan 1995-2020; all other countries 1995-2021. Standard errors, reported in parenthesis, are clustered by country for the pool.
    \end{justify}
    \label{tab:fit_alphas}
\end{table}

\clearpage

\bibliographystyle{aer}
\bibliography{labor-share-wage.bib}

\end{document}